\newtheorem{Assumption}{Assumption}
\newtheorem{Definition}{Definition}
\newtheorem{Proposition}{Proposition}
\newtheorem{Remark}{Remark}
\newcommand{\resp}[1]{{\color{black}  #1}}
\title{\LARGE \bf
Computationally efficient robust MPC \\ using optimized constraint tightening
}
\author{Anilkumar Parsi, Panagiotis Anagnostaras, Andrea Iannelli, Roy S. Smith
\thanks{
This work is supported by the Swiss National Science Foundation under grant no. $200021\_178890$. 
The authors are with the Automatic Control Lab, ETH Z\"{u}rich, Switzerland.
Email: {\tt\small panagnost@student.ethz.ch, \{aparsi,iannelli,rsmith\}@control.ee.ethz.ch}}
}
\begin{document}

\maketitle
\thispagestyle{empty}
\pagestyle{empty}

\begin{abstract}

A robust model predictive control (MPC) method is presented for linear, time-invariant systems affected by bounded additive disturbances. The main contribution is the offline design of a disturbance-affine feedback gain whereby the resulting constraint tightening is minimized. This is achieved by formulating the constraint tightening problem as a convex optimization problem with the feedback term as a variable. The resulting MPC controller has the computational complexity of nominal MPC, and guarantees recursive feasibility, stability and constraint satisfaction. The advantages of the proposed approach compared to existing robust MPC methods are demonstrated using numerical examples.


\end{abstract}

\section{INTRODUCTION}\label{introduction}

Model predictive control (MPC) is a powerful control design technique which can naturally handle multiple input multiple output systems and hard constraints on states and inputs while guaranteeing stability of the closed-loop \cite{morariBook}.

Robust MPC methods consider the case where an exact model of the system is not available. The simplest form of uncertainty is represented by an additive disturbance acting on the state dynamics (or process noise). This problem has been extensively studied for linear systems in the robust MPC literature \cite{Kouvaritakis} and a typical approach is to 
split the design problem into two parts. 
In the first part, called the prediction horizon, a sequence of control laws is computed such that the system's state at the end of the prediction horizon reaches a terminal set. Then, the terminal set is designed such that it is an invariant set of the uncertain system under linear feedback and lies within feasible region of the state space. 

In order to achieve tractable formulations, various parameterizations of the control law have been proposed in the literature. One of the early approaches, referred to in the rest of the paper as Tube MPC (TMPC), uses an affine state feedback policy in the prediction horizon \cite{Chisci}. In \cite{Mayne2}, a similar parameterization is used in combination with a state tube. The state tube is a sequence of parameterized sets containing all feasible trajectories of the system. Whereas \cite{Mayne2} computes the evolution of the state tube online, \cite{Chisci} uses a constraint tightening approach to ensure that all future realizations of the state lie within the constraint set. A comparison between these two methods can be found in \cite{Zanon}, where it was shown that the constraint tightening approach in \cite{Chisci} results in a slightly improved performance. 
Various methods have been proposed which use alternative parameterizations of the state feedback controller within the prediction horizon \cite{munoz2014striped}, or use a flexible formulation to represent the state tube \cite{homothetic}.

An alternative parameterization, which alleviates the conservatism associated with a fixed state-feedback policy, was proposed in  \cite{Goulart} in terms of disturbance-affine feedback control actions. Such a parameterization enables online optimization of the disturbance affine feedback gain. This results in much larger regions of attraction compared to strategies such as \cite{Chisci,Mayne2}, where the state feedback term is computed offline. However, 
the size of the online optimization problem grows quadratically with the length of the prediction horizon. 
\resp{The aforementioned robust MPC methods have been combined with reinforcement learning techniques \cite{zanon2020safe}, and successfully implemented on real-world systems \cite{spacecraft}.} Nevertheless, the previously discussed aspects concerning conservatism and scalability remain open problems and thus motivated the current work. \resp{A recent, related work is \cite{sieber2022system}, where conservatism is reduced by parameterizing state and input trajectories of the system using future disturbances.}


 
In this paper, a novel way to design robust MPC controllers for linear time-invariant systems affected by additive disturbances is presented. In order to guarantee robust constraint satisfaction with a computationally efficient program that can be efficiently solved online, a constraint tightening approach is used. Owing to the improved performance of disturbance affine feedback \cite{Goulart} compared to state feedback, a similar parameterization of the control law is used within the prediction horizon. The main contribution of this work is to formulate the constraint tightening as a convex function of the disturbance affine feedback gain. 
This effectively allows offline optimization over the constraint tightening and thus, indirectly, over the size of the region of attraction. The optimized tightened constraints are used to formulate the online MPC problem whose size grows linearly with the length of the prediction horizon. Numerical simulations show that the computation times are one order of magnitude faster compared to the approach presented in \cite{Goulart}, while improving on the region of attraction obtained with TMPC \cite{Chisci}.
 


\paragraph*{Notation}
The $i^{th}$ row of a matrix $A$ is denoted by $[A]_i$. The value of $x$ at the $k^{th}$ time-step is denoted by $x_k$, and $x_{k,i}$ denotes the value of $x$ $i$ steps after the current time-step $k$. 
For a matrix $A$, vector $y$ and a set $Y$ of appropriate dimensions, $\max_{y\in Y}Ay $ refers to row-wise maximization.
The set of real numbers is denoted by $\mathbb{R}$ and the sequence of non-negative integers from $a$ to $b$ is represented by $\mathbb{Z}_{[a,b]}$. The symbol $\geq $ when used for matrices denotes element-wise comparison, $\preccurlyeq$ denotes a conic constraint \resp{and $ \otimes $ denotes the Kronecker product}. 

\section{BACKGROUND}\label{background}
\subsection{Problem formulation}
This work considers discrete-time, linear, time-invariant systems of the following form
\begin{align}\label{system_dynamics}
x_{k+1} = Ax_{k} + Bu_{k} + B_w w_{k},
\end{align}
where $x_k\in \mathbb{R}^{n_x}, u_k \in \mathbb{R}^{n_u}$ and $w_k\in \mathbb{R}^{n_w}$ denote the state of the system, the control input and the additive disturbance respectively at time-step $k$. \resp{ The matrices $A,B$ and $B_w$ are known and no assumptions are made on their structure. Full state feedback is assumed,}  and the state and the input are constrained as
\begin{align}\label{stage_cons}
(x_k,u_k) \in \mathbb{C} := \{(x,u) | Fx + Gu \le b \} ,
\end{align}
where $F\in \mathbb{R}^{n_c \times n_x}, G \in \mathbb{R}^{n_c \times n_u}$ and $b \in \mathbb{R}^{n_c}$.
\begin{Assumption}\label{As:disturbance_assumption}
The additive disturbance $w_k$ lies in a bounded polytope $\mathbb{W}$ which contains the origin, defined as
\begin{align}\label{disturbance_constraint}
    \mathbb{W} \subseteq \mathbb{R}^{n_w}= \{w  |  Dw\leq d\}.
\end{align}
where $D \in \mathbb{R}^{n_d \times n_w}$ and $d\in \mathbb{R}^{n_d}$,
\end{Assumption}

The goal is to design a control policy in order to steer the system governed by \eqref{system_dynamics} from a given initial condition $x_0$, to a region around the origin while guaranteeing that the constraints in  \eqref{stage_cons} are robustly satisfied, i.e., for all noise realizations compatible with \eqref{disturbance_constraint}. 
Moreover, the following cost function is desired to be minimized
\begin{align}\label{eq:InfHorCost}
J = \textstyle\sum_{i=0}^{\infty} {x}_{k}^T Q {x}_{k} + {u}_{k}^T R {u}_{k} .
\end{align}
A MPC algorithm will be used to achieve the above control task. The infinite horizon problem is relaxed by considering a finite prediction horizon of $N$ time-steps, introducing appropriate terminal ingredients,  and solving a receding horizon optimization problem \cite{morariBook}. Because the cost function \eqref{eq:InfHorCost} depends on future disturbances, a disturbance-free prediction is used to formulate the MPC cost function.


\subsection{Existing methodologies}
The existing methodologies in the relevant robust MPC literature differ in the way the control input is parameterized within the prediction horizon. In tube MPC methods \cite{Chisci} \cite{Mayne2}, the prediction of the future states $x_{k,i}$ is split into nominal $\hat{x}_{k,i}$ and error $e_{k,i}$ components such that $e_{k,i}=x_{k,i}-\hat{x}_{k,i}$. \resp{ In \cite{Chisci}, the parameterization $u_{k,i} = \hat{u}_{k,i}+Ke_{k,i},\: \: i\in\mathbb{Z}_{[0,N-1]}$ is used to obtain the following dynamics for $i\in\mathbb{Z}_{[0,N-1]}$},
\begin{subequations}\label{eq:Split_Dyn}
\begin{align}
    \hat{x}_{k,i+1} = A \hat{x}_{k,i} + B \hat{u}_{k,i}, \quad \hat{x}_{k,0} = x_k,\label{eq:Split_Dyn1} \\
    e_{k,i+1} = (A+BK)e_{k,i} + B_w w_{k,i}, \quad e_{k,0} = 0, \label{eq:Split_Dyn2}
\end{align}
\end{subequations}
where $K\in \mathbb{R}^{n_u\times n_x}$ is a precomputed feedback gain and $\hat{u}_{k,i}$ are online optimization variables. 
The state at the end of the prediction horizon is driven into a terminal set, where the control law $u_{k,i} {=} K x_{k,i}$ is always feasible and \resp{ensures forward invariance of the terminal set. Because the dynamics in \eqref{eq:Split_Dyn2} are completely determined by  $A,B$ and $K$, the worst-case evolution of the error components are used to tighten the constraint set $ \mathbb{C} $ offline, thereby ensuring constraint satisfaction under any possible disturbance sequence in the future. The method in \cite{Mayne2} computes a robust invariant set for the dynamics \eqref{eq:Split_Dyn2} which is then used to tighten $ \mathbb{C} $. } However, both these strategies lead to a reduction in the region of attraction (ROA) of the closed loop system, and there is not a systematic way of choosing $K$ to reduce the conservatism.

An alternative approach which yields larger ROA is presented in \cite{Goulart}, which uses a disturbance affine feedback gain to parameterize the control input, as $ {u}_{k,0} {=} \hat{u}_{k,0} $ and ${u}_{k,i} = \hat{u}_{k,i} + \sum_{j=0}^{i-1} M_{i,j} B_w {w}_{k,j}$, for $i{\in}\mathbb{Z}_{[1,N-1]}$ \resp{ where $\hat{u}_{k,i}$ and $M_{i,j}\in \mathbb{R}^{n_u\times n_x}$ are online optimization variables. This method is referred to as fully parameterized disturbance affine MPC (FPD) in the remainder of this paper.}

The input parameterization used in TMPC and FPD are related to each other, as shown below. Define $\mathbf{u}_{k} = [u_{k,0}^T,u_{k,1}^T,\ldots,u_{k,N-1}^T]^T$, $\mathbf{\hat{u}}_{k} = [\hat{u}_{k,0}^T,\hat{u}_{k,1}^T,\ldots,\hat{u}_{k,N-1}^T]^T$ and $\mathbf{w}_k = [w_{k,0}^T,w_{k,1}^T,\ldots,w_{k,N-1}^T]^T$.  Then, the control laws for TMPC and FPD can be compactly written as 
 \begin{align}\label{eq:H_val}
 \begin{split}
    \mathbf{u}_{k} = \mathbf{\hat{u}}_k + \mathbf{H} \mathbf{B_w} \mathbf{w}_k, \; \;
    \mathbf{H} = 
    \begin{cases}
    \mathbf{K} \quad &\textnormal{for TMPC \cite{Chisci}}, \\
    \mathbf{M} \quad &\textnormal{for FPD \cite{Goulart}},
    \end{cases}
 \end{split}
\end{align}
where $\mathbf{B_w} = I_N \otimes B_w $.  
In \eqref{eq:H_val}, the gain $\mathbf{K}$ is given by
\begin{align}\label{bold_K}
    \mathbf{K}:=\begin{bmatrix}
        0 & ... & ... & ...& 0 \\ K & 0 & ...& ... & 0 \\ K(A+BK) & K &... & ... & 0  \\ \vdots & \vdots & \ddots & \vdots & \vdots\\K(A+BK)^{N-2} & ... & ...&K & 0
        \end{bmatrix},
\end{align}
and $\mathbf{M}$ is also a strictly lower triangular matrix where the non-zero blocks are the individual feedback gains $M_{i,j}$. 
Because the latter are chosen online, the control parameterization used in FPD is much more flexible than TMPC. This results in improved closed loop performance and a larger ROA compared to TMPC. This comes at an increase in the required computational effort. By fixing  $\mathbf{K}$ offline, the computational complexity of TMPC is the same as that of nominal MPC. In contrast, the number of optimization variables in FPD grows quadratically with the length of the prediction horizon and the number of state variables. Another recent method, called fusion of tubes MPC \cite{Kogel} tries to improve the performance of TMPC by choosing the feedback gain as a linear combination of a finite number of predefined gains $K^{[j]}$. 
However, a systematic design procedure of the predefined gains is not provided.

\section{Offline design for constraint tightening}
\label{Sec:DistFeedback}
A disturbance affine feedback policy is used for control in light of its \resp{ability to parameterize closed loop trajectories as optimization variables \cite{Goulart}}. In this section, the relationship between the disturance feedback gain and the resulting constraint tightening will be established. In addition, design conditions for the terminal controller and terminal set will be formulated. 

\subsection{Input parameterization and tightened constraint sets}
The control input is parameterized as
\begin{align}\label{PM_input}
\mathbf{u}_k = \hat{\mathbf{u}}_k+\mathbf{M}_{\text{off}} \mathbf{B_w}\mathbf{w}_k
\end{align}
where $\hat{\mathbf{u}}_k$ are online optimization variables and $\mathbf{M}_{\text{off}}$ is a matrix computed offline. In order to ensure causality and recursive feasibility by design, $\mathbf{M}_{\text{off}}$ satisfies the structure
\begin{align}\label{causality}
    \begin{split}
        \mathbf{M}_{\text{off}}=\begin{bmatrix}
        0 & 0& 0 &...&  0 \\ M_{1}^{\text{off}} & 0 & 0& ... & 0 \\ M_{2}^{\text{off}} & M_{1}^{\text{off}} &0&... & 0  \\ \vdots & \vdots & \ddots & \vdots & \vdots \\M_{N-1}^{\text{off}} & ... & ... & M_{1}^{\text{off}} & 0
        \end{bmatrix}
    \end{split}
\end{align}
where $M^{\text{off}}_{i} {\in} \mathbb{R}^{n_u {\times} n_x}$. Let $\mathcal{M}$ denote the set of matrices satisfying \eqref{causality}. \resp{Because  $M_{1}^{\text{off}},\ldots,M_{N-1}^{\text{off}}$ can be independently chosen, \eqref{causality} is less restrictive compared to \eqref{bold_K}.}


Under the proposed feedback law \eqref{PM_input}, the state dynamics follow \eqref{eq:Split_Dyn1} and the error dynamics can be written as
\begin{align}
    e_{k,i+1} &= Ae_{k,i}+ B\textstyle\sum_{j=0}^{i}M_{i\resp{-j}}^{\text{off}} B_w w_{k,j} , \label{eq:Split_Dyn_PM1}
\end{align}
where $e_{k,0} = 0$, $i\in \mathbb{Z}_{[0,N-1]}$ and $BM_{0}^{\text{off}}:=I_{n_x}$ is a notational overload used for simplicity. The predicted error can be compactly written for $i\in\mathbb{Z}_{[0,N-1]}$ as
\begin{align}\label{eq:CompactError}
\begin{split}
    e_{k,i+1} &= \textstyle\sum_{l=0}^{i} \bigr(\textstyle\sum_{j=0}^{i-l} A^j B M^{\text{off}}_{i-l-j} \bigr) B_w w_{k,l}. 
\end{split}
\end{align}
For $i{\in}\mathbb{Z}_{[0,N-1]} $, define $\mathbb{E}_{i} := \{(e_{k,i},u_{k,i}-\hat{u}_{k,i})|\: \eqref{eq:CompactError}, \: w_{k,l}\in\mathbb{W}, \: \forall l\in \mathbb{Z}_{[0,i-1]} \}$. Using this sequence of sets, tightened constraints can be imposed on the nominal state and input sequences. 
 For this purpose, define $ \forall i\in \mathbb{Z}_{[0,N-1]} $
\begin{equation}\label{eq:Tightened_Sets}
    \mathbb{C}_i := \mathbb{C} \ominus \mathbb{E}_i = \{(x,u) | Fx + Gu \leq b -t_{i}\}, 
\end{equation}
where $t_i{\in}\mathbb{R}^{n_c}$ represents the tightening applied on $\mathbb{C}$ $i$ time-steps into the future. The predicted trajectories in the online optimization problem must thus satisfy $(\hat{x}_{k,i},\hat{u}_{k,i}) \in \mathbb{C}_i$ for all $i\in\mathbb{Z}_{[0,N-1]}$. 

In order to compactly represent the tightened state and input constraints, the following notation is introduced. Let $\mathbf{x}_k=[x_{k,0}^T,x_{k,1}^T,...,x_{k,N-1}^T]^T$, $\mathbf{t} = [t_0^T,\ldots,t_{N-1}^T]^T$ and $\mathcal{W}=\mathbb{W}\times \mathbb{W} \times...\times \mathbb{W}$. Substituting the control law \eqref{PM_input} into dynamics \eqref{system_dynamics}, the state of the system can be written as

\begin{align}\label{PM_sys_ev}
 \mathbf{x}_k = \mathbf{C}_{xx} x_k + \mathbf{C}_{xu} \hat{\mathbf{u}}_k+ (\mathbf{C}_{xw}+\mathbf{C}_{xu} \mathbf{M}_{\text{off}})\mathbf{B}_w \mathbf{w}_k,
\end{align}
where
\begin{footnotesize}
\begin{align*}
    &\mathbf{C}_{xx} := \begin{bmatrix}
    I, A,\hdots, A^{N-1}
    \end{bmatrix}^T, \quad \mathbf{B}_w = I_{N-1} \otimes B_w,\\
    &\mathbf{C}_{xu} {:=} \begin{bmatrix}
    0 & ... & ... & 0 \\ B & 0 & ... & 0 \\ AB & B & ... & 0 \\ \vdots & \vdots & \ddots & \vdots \\ A^{N{-}2} B &...&AB &  B
    \end{bmatrix},
    \mathbf{C}_{xw} {:=} \begin{bmatrix}
    0 & ... & ... & 0 \\ I & 0 & ... & 0 \\ A & I & ... & 0\\ \vdots & \vdots & \ddots & \vdots \\ A^{N{-}2} & ... & ... & I
    \end{bmatrix}.
\end{align*}
\end{footnotesize}
By denoting $\mathbf{F} := I_{N} \otimes F$, $
    \mathbf{G} := I_{N} \otimes G $, $  
    \mathbf{b} := 1_{N} \otimes  b $, $ \mathbf{D}:=I_{N-1} \otimes D$, and $
    \mathbf{d} := \mathbf{1}_{N-1} \otimes d $, the constraints on the system trajectory in the prediction horizon of MPC are
\begin{align}\label{eq:aug_con}
    \mathbf{F} \mathbf{x}_k + \mathbf{G} \mathbf{u}_k \le \mathbf{b}.
\end{align}
Using \eqref{PM_sys_ev}, \eqref{eq:aug_con} can be written as
\begin{subequations}
\begin{align}
&(\mathbf{F}\mathbf{C}_{xu} + \mathbf{G}) \hat{\mathbf{u}}_k + \mathbf{F} \mathbf{C}_{xx} x_k \leq \mathbf{b}- \mathbf{t}, \label{tightened_cons} \\
\mathbf{t}&=\max_{\mathbf{w}_k \in \mathcal{W}} (\mathbf{F}\mathbf{C}_{xw} + \mathbf{F}\mathbf{C}_{xu} \mathbf{M}_{\text{off}} + \mathbf{G} \mathbf{M}_{\text{off}})\mathbf{B}_w \mathbf{w}_k. \label{t}
\end{align}
\end{subequations}
Note that the maximization in \eqref{t} is performed row-wise. For $j\in \mathbb{Z}_{[1,Nn_c]}$, $[t]_j$ represents the tightening to be performed on the $j^{th}$ constraint. Using strong duality, the $j^{th}$ row in \eqref{t} can be equivalently written as 
\begin{subequations}\label{i_opt_prob_2}
\begin{align}
    \min_{z_{(j)},[t]_j} \quad  &[t]_j  \\
     \text{s.t.} \quad [t]_j &=z_{(j)}^T \mathbf{d},\quad z_{(j)} \geq 0,\quad [t]_j \geq 0, \\ 
    z_{(j)}^T \mathbf{D} &= [(\mathbf{F}\mathbf{C}_{xw} + \mathbf{F}\mathbf{C}_{xu} \mathbf{M}_{\text{off}} + \mathbf{G} \mathbf{M}_{\text{off}})\mathbf{B}_w]_j.
\end{align}
\end{subequations}
Thus, for a fixed  $\mathbf{M}_{\text{off}}$, \eqref{t} can be formulated as $Nn_c$ linear programs in the variables $z_{(j)}$ and $[t]_j$. \resp{The key observation leveraged here is that}, if $\mathbf{M}_{\text{off}}$ is an optimization variable, it can be chosen such that the tightenings in $\mathbf{t}$ are minimized.

\subsection{Terminal sets} 
A terminal set is to be chosen such that a linear feedback policy of the form $u = K_f x$ is always feasible inside the terminal set. In order to ensure recursive feasibility of the online optimization problem, the following assumption needs to be satisfied by the terminal set and the terminal controller.

\begin{Assumption}\label{As:TerminalSet}
There exists a terminal set $\mathcal{X}_T$, a disturbance feedback gain $M_N^\text{off} \in \mathbb{R}^{n_u \times n_x} $ and a terminal feedback gain $K_f \in \mathbb{R}^{n_u\times n_x}$ such that for all $x\in\mathcal{X}_T$ and $w \in \mathbb{W}$,
\begin{subequations}\label{eq:TermSetAssump}
\begin{align}
 &\bigr(x{+} \textstyle\sum_{j=0}^{N-1}A^j BM_{N{-}1{-}j}^{\text{off}} B_w w, \: K_fx {+} M_N^\text{off}B_w w \bigr)     
     \in \mathbb{C}_{N{-}1},  \label{eq:TermSetAssump1} \\
&(A+BK_f)x {+} \textstyle\sum_{j=0}^{N} A^j B M_{N{-}j}^{\text{off}}B_w  w  \in \mathcal{X}_T. \label{eq:TermSetAssump2}
\end{align}
\end{subequations}
\end{Assumption}


\begin{Remark}\label{Rem:RecFeasCond}
Note that conditions similar to \eqref{eq:TermSetAssump} are imposed on the terminal set to ensure recursive feasibility of TMPC in \cite{Chisci,Kogel}. 
\end{Remark}

It can be seen that Assumption \ref{As:TerminalSet} results in a coupling of the design of the terminal set $\mathcal{X}_T$, the terminal feedback gain $K_f$ and the disturbance affine feedback gains $\mathbf{M}_\text{off}$ and $M_N^\text{off}$. Even when the feedback gains are known, the design of terminal sets is a difficult problem, since they are computed by iterative set-intersections performed on polytopic sets \cite{Kouvaritakis}. In this work, the design process is simplified by first choosing the terminal components as is done in standard tube MPC methods \cite{Chisci}. The gain $K_f$ is chosen such that $(A+BK_f)$ is Schur stable, and the terminal set is constructed as a polytope of the form $\mathcal{X}_T := \{x| Yx \le z\}$. Then, \eqref{eq:TermSetAssump} will be reformulated as convex constraints on the disturbance affine feedback gains $\mathbf{M}_\text{off}$ and $M_N^\text{off}$ in the following proposition.
%

\begin{Proposition}\label{Prop:TermSetDesign}
Given the terminal ingredients $\mathcal{X}_T$ and $ K_f$, define $c_F = \max_{x\in\mathcal{X}_T} (F+GK_f) x$ and $c_Y =\max_{x\in\mathcal{X}_T} Y(A + BK_f) x $. If there exist $\Lambda_1 \in  \mathbb{R}^{n_c \times n_d}$ and $\Lambda_2 \in  \mathbb{R}^{n_t \times n_d}$, such that
 \begin{align}\label{eq:TermSetDesign_Cond}
 \begin{split} 
     \Lambda_1 \ge 0, \quad \Lambda_2 \ge 0,   \quad  \Lambda_1 d  + c_F \le b &- t_{N-1} ,  \\
     F\textstyle\sum_{j=0}^{N-1}A^j BM_{N{-}1{-}j}^{\text{off}} B_w + GM_N^\text{off} B_w &= \Lambda_1 D , \\
     Y \textstyle\sum_{j=0}^{N} A^j B M_{N{-}j}^{\text{off}}B_w  = \Lambda_2 D,\quad   \Lambda_2 d &+  c_Y \le z, 
 \end{split}
 \end{align}
 then Assumption \ref{As:TerminalSet} is satisfied.
\end{Proposition}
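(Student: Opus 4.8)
The plan is to prove sufficiency directly: assuming the multiplier matrices $\Lambda_1,\Lambda_2$ satisfying \eqref{eq:TermSetDesign_Cond} exist, I would show that the two set-membership requirements \eqref{eq:TermSetAssump1} and \eqref{eq:TermSetAssump2} of Assumption \ref{As:TerminalSet} hold for every $x\in\mathcal{X}_T$ and every $w\in\mathbb{W}$. The conceptual engine is a robust-counterpart (Farkas-type / LP-duality) argument: the two equality conditions in \eqref{eq:TermSetDesign_Cond} match the coefficient of the disturbance $w$ with $\Lambda_i D$, and the nonnegativity $\Lambda_i\ge 0$ together with $Dw\le d$ then lets me dominate the disturbance-dependent term by $\Lambda_i d$ uniformly in $w$.

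First I would fix arbitrary $x\in\mathcal{X}_T$ and $w\in\mathbb{W}$ and address \eqref{eq:TermSetAssump1}. Unfolding the definition of $\mathbb{C}_{N-1}$ in \eqref{eq:Tightened_Sets}, membership is equivalent to $(F+GK_f)x + \bigl(F\textstyle\sum_{j=0}^{N-1}A^jBM_{N-1-j}^{\text{off}}B_w + GM_N^{\text{off}}B_w\bigr)w \le b - t_{N-1}$. Using the first equality of \eqref{eq:TermSetDesign_Cond} I would replace the parenthesized $w$-coefficient by $\Lambda_1 D$, so the left side becomes $(F+GK_f)x + \Lambda_1 D w$. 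Since the maximization defining $c_F$ is row-wise, $(F+GK_f)x\le c_F$; since $\Lambda_1\ge 0$ and $Dw\le d$, row-wise monotonicity gives $\Lambda_1 D w\le\Lambda_1 d$. Adding these two bounds and invoking $\Lambda_1 d + c_F \le b - t_{N-1}$ closes the argument.

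The verification of \eqref{eq:TermSetAssump2} is entirely analogous. Expanding membership in $\mathcal{X}_T=\{x\mid Yx\le z\}$ yields $Y(A+BK_f)x + Y\textstyle\sum_{j=0}^{N}A^jBM_{N-j}^{\text{off}}B_w\, w \le z$; the second equality of \eqref{eq:TermSetDesign_Cond} identifies the $w$-coefficient as $\Lambda_2 D$, and I would then bound $Y(A+BK_f)x\le c_Y$ and $\Lambda_2 D w\le\Lambda_2 d$ exactly as before, finishing with $\Lambda_2 d + c_Y \le z$. Since $x$ and $w$ were arbitrary, both inclusions hold for all $x\in\mathcal{X}_T$ and $w\in\mathbb{W}$, which is precisely Assumption \ref{As:TerminalSet}.

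I expect essentially no technical obstacle, since the statement is a one-directional sufficiency result that reduces to substitution and sign-aware bounding. The only point requiring care is the row-wise reading of every inequality: the maxima $c_F,c_Y$ and the products $\Lambda_i D w$ must be interpreted entrywise, so that the bound $\Lambda_i D w\le\Lambda_i d$ is applied row by row in accordance with the nonnegativity of each row of $\Lambda_i$. This is the same mechanism that underlies the tightening reformulation \eqref{i_opt_prob_2}, and making it explicit is the crux of the write-up.
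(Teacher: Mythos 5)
Your proposal is correct and follows essentially the same route as the paper: both separate the $x$-dependence (absorbed offline into $c_F$, $c_Y$) from the $w$-dependence, which is handled by LP duality over $\mathbb{W}$. The only difference is that the paper invokes strong duality to state an equivalence, whereas you prove the (sufficient) direction directly via the weak-duality bound $\Lambda_i Dw \le \Lambda_i d$, which is all the proposition actually requires.
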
 
\begin{proof}
For a given $\mathcal{X}_T$ and $K_f$, \eqref{eq:TermSetAssump1} is satisfied if
\begin{align}\label{eq:TermSet_cond1}
   \max_{ w\in\mathbb{W}} \: F \textstyle\sum_{j=0}^{N-1}A^j &BM_{N{-}1{-}j}^{\text{off}}B_w w + G M_N^\text{off}B_w w + \nonumber \\ 
  &\max_{x\in\mathcal{X}_T}  Fx+ GK_f x   \le b-t_{N-1}. 
\end{align}
Similarly, \eqref{eq:TermSetAssump2} is satisfied if 
\begin{align}\label{eq:TermSet_cond2}
\max_{ w\in\mathbb{W}} \: \sum_{j=0}^{N} A^j B M_{N{-}j}^{\text{off}}B_w  w  + \max_{x\in\mathcal{X}_T}  Y(A+BK_f) x   \le z. 
\end{align}
In  \eqref{eq:TermSet_cond1} and \eqref{eq:TermSet_cond2}, the maximizations over $\mathcal{X}_T$ can be performed to compute $c_F, c_Y$ before choosing $\mathbf{M}_\text{off}$ and $M_N^\text{off}$, as the terminal ingredients are known. Then, applying strong duality to the maximization over $\mathbb{W}$, \eqref{eq:TermSet_cond1} and \eqref{eq:TermSet_cond2} are satisfied iff there exist $\Lambda_1$ and $\Lambda_2$ such that \eqref{eq:TermSetDesign_Cond} holds.
\end{proof}

\section{Robust MPC controller}
This section presents the algorithm defining the robust MPC controller proposed in this work. First, the offline and online optimization problems to be solved are described. Then, the system theoretic properties of the closed loop system are presented.


\subsection{Offline optimization}
Building on the formulation illustrated in Section \ref{Sec:DistFeedback} of convex constraints on  $\mathbf{M}_\text{off}$ and $M_N^\text{off}$, these design matrices can be computed as the solution of an optimization problem. In order to reduce the conservatism of the robust MPC controller, it is desired that the feasible region of the online optimization problem is large. This feasible region is represented by a polytope in a high dimensional space, whose volume is difficult to compute. One way to approximate this objective is to  minimize the amount of constraint tightening to be performed. To this aim, the following optimization problem is solved
\begin{subequations}\label{PM_l2_norm}
\begin{align}
    &\min_{\substack{Z,\mathbf{t},\Gamma,\mathbf{M}_{\text{off}},\\
    M^{\text{off}}_N,\Lambda_1,\Lambda_2}} \quad  \|\mathbf{t}\|_2 \label{PM_l2_norm1} \\
    \text{s.t.} \quad &Z^T \mathbf{D}=(\mathbf{F}\mathbf{C}_{xw} + \mathbf{F}\mathbf{C}_{xu} \mathbf{M}_{\text{off}} + \mathbf{G} \mathbf{M}_{\text{off}})\mathbf{B}_w, \label{PM_l2_norm2} \\ 
    &Z^T \mathbf{d}= \mathbf{t}, \quad Z \geq 0, \quad \mathbf{t} \ge 0, \label{PM_l2_norm3}\\
    &\mathbf{b} - \mathbf{t} \geq 0 ,\quad  \mathbf{M}_{\text{off}} \in \mathcal{M}, \quad \eqref{eq:TermSetDesign_Cond},\label{PM_l2_norm5}
\end{align}
\end{subequations}
where $Z= [
z_{(1)}, \hdots, z_{(Nn_c)}
]$.
In problem \eqref{PM_l2_norm}, \eqref{PM_l2_norm2}-\eqref{PM_l2_norm3} are obtained by stacking the individual constraints from \eqref{i_opt_prob_2}. Note that $\mathbf{M}_{\text{off}}$ is an optimization variable in \eqref{PM_l2_norm}.
The \resp{first constraint in \eqref{PM_l2_norm5} is introduced to preserve the feasibility of the online MPC optimization problem}. Using an epigraph reformulation, \eqref{PM_l2_norm} can be transformed into a convex second-order cone program  which can be efficiently solved \cite{boyd_vandenberghe_2004}.
\begin{Remark}\label{Rem:Fusion}
Problem \eqref{PM_l2_norm} provides a formal way to design constraint tightening by optimizing offline over the disturbance feedback gains. Following a similar approach to \cite{Kogel}, the proposed methodology can be easily extended to fuse multiple feedback gains online. If this strategy is pursued, additional feedback gains can be designed, for example, by using a weighted norm in the objective function of \eqref{PM_l2_norm}.
\end{Remark}
\begin{Remark}\label{Rem:Kf_better}
Problem \eqref{PM_l2_norm} minimizes  $\|\mathbf{t}\|_2$ as a surrogate for the maximization of the feasible region. 
Note that the TMPC solution \cite{Chisci} is feasible for \eqref{PM_l2_norm}, which can be verified by choosing $\mathbf{M}_\text{off}$ to be of the structure in \eqref{bold_K} with $K=K_f$ and $M_N^\text{off} = K_f(A+BK_f)^{N-1}$. In simulation studies, it was found that the proposed method could sometimes lead to smaller ROA compared to TMPC \resp{using $K_f$ for feedback}. This might be justified considering that minimizing $\|\mathbf{t}\|_2$ is a heuristic to maximize the size of the feasible region. A possible remedy for this problem, which can be detected offline, is to explicitly enforce that the tightenings in $\mathbf{t}$ are smaller than those resulting from TMPC. 
\end{Remark}

\subsection{Receding horizon control}
Once the constraint tightenings are computed from \eqref{PM_l2_norm}, the online optimization problem 
can be formulated as
\begin{subequations}\label{OptProb_PM}
    \begin{align}
        \min_{\hat{\mathbf{u}}_k} \quad \textstyle\sum_{i=0}^{N-1}(\hat{x}_{k,i}^\top Q \hat{x}_{k,i}  &{+} \hat{u}_{k,i}^\top R \hat{u}_{k,i})\label{OptProb_PM1} {+} \hat{x}_{k,N}^\top P \hat{x}_{k,N}\\
		\text{s.t.} \quad A\hat{x}_{k,i} + B\hat{u}_{k,i} &= \hat{x}_{k,i+1}, \quad \hat{x}_{k,0}=x_k, \label{OptProb_PM2}\\
            F\hat{x}_{k,i} + G\hat{u}_{k,i}  &\le b - t_i,  \quad i\in \mathbb{Z}_{[0,N-1]}, \label{OptProb_PM3} \\
            Y\hat{x}_{k,N} &\le z . \label{OptProb_PM4}
    \end{align}
\end{subequations}
In \eqref{OptProb_PM}, 
the terminal cost is defined using a positive definite matrix $P$ chosen such that 
\begin{align}\label{eq:TermCostLyap}
    (A+BK_f)^\top P (A+BK_f) + Q + K_f^\top R K_f \preccurlyeq P.
\end{align}
It is worth observing that the computational complexity of \eqref{OptProb_PM} is the same as that of nominal MPC, and thus grows only linearly with the length of the prediction horizon. The robust MPC design scheme is summarized in \cref{pseudoalgo}. 

\setlength{\textfloatsep}{0pt}

\begin{algorithm}[t]
\caption{ Optimized constraint tightening for robust MPC}\label{pseudoalgo}
\begin{algorithmic}[1]
\Statex \textbf{Offline:} 
\State Choose $K_f$ such that $(A+BK_f)$ is Schur stable
\State Design $\mathcal{X}_T$ following \cite{Chisci}, compute $c_F$ and $c_Y$
\State Solve  \eqref{PM_l2_norm} to compute $\mathbf{t}$
\setcounter{ALG@line}{0}
\Statex \textbf{Online:} At each time-step $k\ge0$:
\State Obtain the measurement $x_k=x_{k,0}=\hat{x}_{k,0}$
\State Solve \eqref{OptProb_PM}
\State Apply $\pi(x_k)=\hat{u}^*_{k,0}\in \mathbb{R}^{n_u}$
\end{algorithmic}
\end{algorithm}

\subsection{Closed-loop properties}\label{sec:ClosedLoopProperties}
In this section, it will be shown that problem \eqref{OptProb_PM} is recursively feasible and that the closed loop system is input-to-state stable (ISS). The set of admissible policies and the feasible set of \eqref{OptProb_PM} are first defined.

\begin{Definition}[Admissible control inputs]\label{def:disturb affine admissible policies}
Given a  state $x_k$, $\Pi_W(x_k)$ is the set of all admissible sequences $\hat{\mathbf{u}}_k$ in \eqref{OptProb_PM}. That is, $\Pi_W(x_k):=\left\{ \hat{\mathbf{u}}_k \: | \:  \eqref{OptProb_PM2},\eqref{OptProb_PM3},\eqref{OptProb_PM4} \right\}$.
\end{Definition}

\begin{Definition}[Feasible set]\label{def:disturb affine feasible set}
 The feasible set of the optimization problem \eqref{OptProb_PM} is defined as the set of all initial states for which there exists at least one admissible control policy, that is,  $X_W := \{x_0  \rvert \quad \Pi_W(x_0) \neq \emptyset\}$.
\end{Definition}

\begin{Proposition}[Recursive Feasibility]\label{Prop:Recursive feasibility}
Let  Assumptions \ref{As:disturbance_assumption} and \ref{As:TerminalSet} be satisfied and the offline optimization problem \eqref{PM_l2_norm} have a feasible solution. Then, the receding horizon policy generated by \cref{pseudoalgo} and applied to system \eqref{system_dynamics} results in a recursively feasible online optimization problem \eqref{OptProb_PM}.
\end{Proposition}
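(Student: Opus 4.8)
The plan is to run the standard one-step induction for receding-horizon control: assuming \eqref{OptProb_PM} is feasible at time $k$ with minimizer $\hat{\mathbf{u}}_k^* = [(\hat u_{k,0}^*)^T,\ldots,(\hat u_{k,N-1}^*)^T]^T$ and induced nominal states $\hat x_{k,0}^*,\ldots,\hat x_{k,N}^*$, I would exhibit an admissible candidate $\hat{\mathbf{u}}_{k+1}\in\Pi_W(x_{k+1})$ for the successor state $x_{k+1}=Ax_k+B\hat u_{k,0}^*+B_w w_{k,0}$ generated by \cref{pseudoalgo}. Since the realized disturbance $w_{k,0}$ is an arbitrary element of $\mathbb{W}$, producing such a candidate for every $w_{k,0}\in\mathbb{W}$ proves recursive feasibility.

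The candidate I would use is the one-step shift of the optimal policy in which the realized disturbance is folded into the nominal prediction and the terminal controller is appended. Explicitly, set $\hat u_{k+1,i} := \hat u_{k,i+1}^* + M_{i+1}^{\text{off}} B_w w_{k,0}$ for $i\in\mathbb{Z}_{[0,N-2]}$ and $\hat u_{k+1,N-1} := K_f \hat x_{k,N}^* + M_N^{\text{off}} B_w w_{k,0}$. A short induction on the nominal dynamics \eqref{OptProb_PM2}, using the Toeplitz structure \eqref{causality} and the convention $BM_0^{\text{off}} = I_{n_x}$, then shows that the induced nominal states obey $\hat x_{k+1,i} = \hat x_{k,i+1}^* + \delta_{i+1}$ for $i\in\mathbb{Z}_{[0,N-1]}$, where $\delta_{i+1} := \bigl(\sum_{j=0}^{i} A^j B M_{i-j}^{\text{off}}\bigr) B_w w_{k,0}$ is exactly the contribution of $w_{k,0}$ to the predicted error $e_{k,i+1}$ in \eqref{eq:CompactError}. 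In particular $\hat x_{k+1,0}=x_{k+1}$, since $\delta_1 = B_w w_{k,0}$.

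With this candidate, I would check the three constraint families of \eqref{OptProb_PM} in turn. For the stage constraints \eqref{OptProb_PM3} with $i\in\mathbb{Z}_{[0,N-2]}$, substitution yields $F\hat x_{k+1,i}+G\hat u_{k+1,i} = (F\hat x_{k,i+1}^*+G\hat u_{k,i+1}^*) + (F\delta_{i+1}+GM_{i+1}^{\text{off}}B_w w_{k,0})$, where the first bracket is at most $b-t_{i+1}$ by feasibility at $k$ and the second is the effect of the single disturbance $w_{k,0}\in\mathbb{W}$ acting at lag $i+1$. The crux of the argument is the identity $t_{i+1}=t_i+\tau_{i+1}$, where $\tau_{i+1}:=\max_{w\in\mathbb{W}}\bigl(F\bigl(\sum_{j=0}^{i}A^j BM_{i-j}^{\text{off}}\bigr)B_w w + GM_{i+1}^{\text{off}}B_w w\bigr)$ is the worst-case single-disturbance contribution at lag $i+1$; granting it, the second bracket is bounded row-wise by $\tau_{i+1}=t_{i+1}-t_i$, so $F\hat x_{k+1,i}+G\hat u_{k+1,i}\le b-t_i$. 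For the terminal stage, $\hat x_{k+1,N-1}=\hat x_{k,N}^*+\delta_N$ with $\hat x_{k,N}^*\in\mathcal{X}_T$ (from \eqref{OptProb_PM4} at time $k$) and $w_{k,0}\in\mathbb{W}$, so \eqref{eq:TermSetAssump1} of Assumption \ref{As:TerminalSet} gives $(\hat x_{k+1,N-1},\hat u_{k+1,N-1})\in\mathbb{C}_{N-1}$, which is \eqref{OptProb_PM3} at $i=N-1$. Propagating once more through \eqref{OptProb_PM2} and regrouping powers of $A$ yields $\hat x_{k+1,N}=(A+BK_f)\hat x_{k,N}^* + \sum_{j=0}^{N}A^j BM_{N-j}^{\text{off}}B_w w_{k,0}$, which lies in $\mathcal{X}_T$ by \eqref{eq:TermSetAssump2}, establishing \eqref{OptProb_PM4}. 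Hence $\hat{\mathbf{u}}_{k+1}\in\Pi_W(x_{k+1})$.

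The step I expect to be the main obstacle is the tightening identity $t_{i+1}=t_i+\tau_{i+1}$. It relies on two properties I would make precise: first, the separability of the row-wise maximization \eqref{t}--\eqref{i_opt_prob_2} over the product set $\mathcal{W}=\mathbb{W}\times\cdots\times\mathbb{W}$, which lets the worst case be selected independently in each disturbance block so that $t_{i+1}$ equals the sum of per-block contributions at lags $1,\ldots,i+1$; and second, the shift-invariance of those per-block contributions, a direct consequence of the causal Toeplitz structure \eqref{causality}, which makes the effect of $w_{k,0}$ on the stage-$(i+1)$ constraint depend only on the lag $i+1$. Together these give $t_{i+1}-t_i=\tau_{i+1}$, i.e. the single extra summand introduced by absorbing $w_{k,0}$ into the nominal trajectory is exactly the term distinguishing consecutive tightenings. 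The remaining manipulations are routine bookkeeping with the dynamics, and the terminal stage is closed cleanly by Assumption \ref{As:TerminalSet}, which was constructed precisely to accommodate this shifted candidate.
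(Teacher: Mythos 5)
Your proposal is correct and follows essentially the same route as the paper's proof: the identical shifted candidate \eqref{eq:Candidate_traj}, the same decomposition of the stage constraints into the old feasible part bounded by $b-t_{i+1}$ plus the single-lag contribution of $w_{k,0}$, the same tightening recursion $t_{i+1}=t_i+\tau_{i+1}$ (your form with $GM_{i+1}^{\text{off}}B_w w$ is in fact the consistent one matching \eqref{eq:Ineq_recFeas}), and the same appeal to Assumption \ref{As:TerminalSet} for the stage $i=N-1$ and terminal constraints. Your explicit justification of the recursion via separability of the row-wise maximum over the product set $\mathcal{W}$ and shift-invariance from the Toeplitz structure \eqref{causality} is exactly the reasoning the paper leaves implicit in \eqref{eq:t_i_i1}.
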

\begin{proof}
Assume $x_k \in X_W$, and let the optimal solution to \eqref{OptProb_PM} at time-step $k$ be given by the nominal state and control input sequences $\{ \hat{x}_{k,i}^*\}_{i=0}^{N},\{ \hat{u}_{k,i}^*\}_{i=0}^{N-1} $. The state of the system at time-step $k+1$ is given by $x_{k+1} = Ax_k + B\hat{u}_{k,0}^* + B_w w_k = \hat{x}_{k,1}^* + B_w w_k$.  
The online optimization problem \eqref{OptProb_PM} is recursively feasible if there exists a feasible state and control input sequence at time-step $k+1$. 

Consider the candidate state and input sequences given as
\begin{align}
    &\hat{u}_{k+1,i} =  \hat{u}_{k,i+1}^* + M^{\text{off}}_{i+1}B_w w_k, \quad  i\in \mathbb{Z}_{[0,N-2]}, \nonumber \\
    &\hat{u}_{k+1,N-1} = K_f \hat{x}_{k,N}^* + M_{N}^\text{off}B_w w_k,  \label{eq:Candidate_traj} \\
    &\hat{x}_{k+1,i} = \hat{x}_{k,i+1}^* + \textstyle\sum_{j=0}^{i} A^j B M^{\text{off}}_{i-j} B_w w_k, \quad i\in \mathbb{Z}_{[0,N-1]}, \nonumber  \\
    &\hat{x}_{k+1,N} = (A+B K_f)\hat{x}_{k,N}^* + \textstyle\sum_{j=0}^{N} A^j B M^{\text{off}}_{N-j} B_w w_k.\nonumber 
\end{align}
It can be easily verified that the candidate solution \eqref{eq:Candidate_traj} satisfies the dynamics constraint \eqref{OptProb_PM2} in the online optimization problem. Moreover, for $ i\in \mathbb{Z}_{[0,N-2]}$, the terms in the state and input constraints can be written as
\begin{align}\label{eq:Ineq_recFeas}
    F\hat{x}_{k+1,i} + G\hat{u}_{k+1,i}  &= F \hat{x}_{k,i+1}^* + G\hat{u}_{k,i+1}^* + \\ F\textstyle\sum_{j=0}^{i} &A^j B M^{\text{off}}_{i-j} B_w w_k + G M^{\text{off}}_{i+1}B_w w_k .\nonumber
\end{align}
The first two terms on the right hand side in \eqref{eq:Ineq_recFeas} can be upper bounded by $b-t_{i+1}$, because the trajectory computed at time $k$ satisfies \eqref{OptProb_PM2}. Moreover, the following relationship holds between $t_i$ and $t_{i+1}$ for $ i\in \mathbb{Z}_{[0,N-2]}$,
\begin{align}\label{eq:t_i_i1}
    t_{i+1} &=  \max_{\{w_l\in\mathbb{W}\}_{l=0}^{i}} F \textstyle\sum_{l=0}^{i} \sum_{j=0}^{i{-}l} A^j B M_{i{-}j}^{\text{off}} B_w w_{l} \nonumber\\
    &\qquad + G \textstyle\sum_{l=0}^{i} M_{l+1}^{\text{off}} B_w w_{i-l}  \\
    &= t_i + \max_{w_0\in\mathbb{W}} F\textstyle\sum_{j=0}^{i} A^j B M_{i{-}j}^{\text{off}} B_w w_{0} + G M_i^{\text{off}} B_w w_0. \nonumber
\end{align}
Thus, using \eqref{eq:Ineq_recFeas} and \eqref{eq:t_i_i1}, \eqref{OptProb_PM3} holds for $ i\in \mathbb{Z}_{[0,N-2]}$ at time-step $k+1$. Similarly, the feasibility of \eqref{OptProb_PM3} for $i{=}N{-}1$ and the terminal constraints \eqref{OptProb_PM4} is a direct consequence of the proposed design satisfying \eqref{eq:TermSetAssump1} and  \eqref{eq:TermSetAssump2}, respectively. 
Thus, \eqref{OptProb_PM} is recursively feasible.
\end{proof}

\begin{Proposition}[Input-to-state stability]
Let Assumptions \ref{As:disturbance_assumption} and \ref{As:TerminalSet} be satisfied and the offline optimization problem \eqref{PM_l2_norm} have a feasible solution. Then, the closed loop formed by system \eqref{system_dynamics} and the receding horizon policy $\pi(x_k)$ is ISS. That is, there exists a $\mathcal{K L}$-function $\beta(\cdot)$ and a $\mathcal{K}$-function $\gamma(\cdot)$ as defined in \cite{iss}, such that for $\bar{w} = \sup_{\tau\in \mathbb{Z}_{[0,k-1]}} \|w_\tau \|  $,
 \begin{align}
     &\| x_k\| \leq \beta (\|x_0\|,k) + \gamma (\bar{w}),  \: \forall k \in \mathbb{Z}_{[0,\infty)},  x_0 \in X_W.
 \end{align}
\end{Proposition}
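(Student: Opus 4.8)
The plan is to use the optimal value function of the online problem \eqref{OptProb_PM} as an ISS-Lyapunov function and then invoke the ISS-Lyapunov characterization of \cite{iss}. Define
\begin{align*}
V(x_k) := \min_{\hat{\mathbf{u}}_k \in \Pi_W(x_k)} \textstyle\sum_{i=0}^{N-1}\bigl(\hat{x}_{k,i}^\top Q \hat{x}_{k,i} + \hat{u}_{k,i}^\top R \hat{u}_{k,i}\bigr) + \hat{x}_{k,N}^\top P \hat{x}_{k,N},
\end{align*}
and write $\ell(x,u) := x^\top Q x + u^\top R u$ for the stage cost. Since Proposition \ref{Prop:Recursive feasibility} guarantees $x_{k+1}\in X_W$ whenever $x_k\in X_W$, $V$ is well defined along closed-loop trajectories. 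First I would establish the two sandwiching bounds. For the lower bound, assuming $Q\succ 0$, the first stage term gives $V(x_k)\ge x_k^\top Q x_k \ge \lambda_{\min}(Q)\|x_k\|^2 =: \alpha_1(\|x_k\|)$ with $\alpha_1\in\mathcal{K}_\infty$. For the upper bound, note $V(0)=0$ (the all-zero trajectory is feasible and costs nothing when $x_k=0$), and continuity of $V$ as the value of a parametric convex QP yields a $\mathcal{K}_\infty$ function $\alpha_2$ with $V(x)\le\alpha_2(\|x\|)$ on compact subsets of $X_W$.

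The core of the argument is the decrease condition, which reuses the recursive-feasibility candidate \eqref{eq:Candidate_traj}. Evaluating the cost of that candidate at time $k{+}1$ and using optimality, $V(x_{k+1})\le J\bigl(\{\hat{x}_{k+1,i},\hat{u}_{k+1,i}\}\bigr)$. Splitting the candidate state and input into the shifted optimal trajectory $\hat{x}^*_{k,i+1},\hat{u}^*_{k,i+1}$ plus the disturbance-driven corrections $\delta_i := \textstyle\sum_{j=0}^{i}A^jBM^{\text{off}}_{i-j}B_w w_k$ (linear in $w_k$), I would expand every quadratic term $(\hat{x}^*_{k,i+1}+\delta_i)^\top Q(\hat{x}^*_{k,i+1}+\delta_i)$, and similarly for the inputs and the terminal term. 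The purely nominal part telescopes: applying the terminal cost inequality \eqref{eq:TermCostLyap} to the appended terminal stage shows that the shifted nominal cost is bounded above by $V(x_k)-\ell(x_k,\hat{u}^*_{k,0})$. Collecting the remaining contributions, which are exactly the cross terms (linear in $w_k$) and the quadratic-in-$w_k$ terms, one arrives at
\begin{align*}
V(x_{k+1}) \le V(x_k) - \ell(x_k,\hat{u}^*_{k,0}) + \rho(x_k,w_k),
\end{align*}
with $\ell(x_k,\hat{u}^*_{k,0})\ge \lambda_{\min}(Q)\|x_k\|^2 =: \alpha_3(\|x_k\|)$.

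The main obstacle is that $\rho$ contains cross terms of the form $2\hat{x}^{*\top}_{k,i+1}Q\delta_i$, which are linear in $w_k$ but multiplied by optimal-trajectory values rather than by $x_k$ or $w_k$ alone, so a priori they are not a function of $\|w_k\|$ only. To resolve this I would work on a sublevel set $\Omega_c:=\{x\mid V(x)\le c\}$: since $Q\succ 0$, every state of the optimal trajectory satisfies $\|\hat{x}^*_{k,i}\|^2\le V(x_k)/\lambda_{\min}(Q)\le c/\lambda_{\min}(Q)$, giving a uniform bound on the trajectory over $\Omega_c$. As $\|\delta_i\|\le c_i\|w_k\|$ with $c_i$ depending only on $A,B,B_w,M^{\text{off}}_\bullet$, the cross terms are bounded by $a\|w_k\|$ and the quadratic terms by $b\|w_k\|^2$, so $\rho\le \sigma(\|w_k\|)$ with $\sigma(s)=as+bs^2\in\mathcal{K}$. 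Forward invariance of a suitable sublevel set for bounded disturbances follows from the decrease inequality itself, so $\Omega_c$ is an admissible region on which all estimates hold uniformly. With $\alpha_1,\alpha_2,\alpha_3$ and $\sigma$ in hand, $V$ is a (regional) ISS-Lyapunov function, and the ISS estimate with the stated $\mathcal{KL}$-function $\beta$ and $\mathcal{K}$-function $\gamma$ follows directly from the standard result in \cite{iss}.
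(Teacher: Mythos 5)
Your proposal is correct in spirit but takes a genuinely different route from the paper. The paper also uses $J^*_N$ as an ISS-Lyapunov function and the same candidate \eqref{eq:Candidate_traj}, but it handles the disturbance in one stroke: it splits $J^*_N(f(x,w))-J^*_N(x)$ into $\bigl[J^*_N(f(x,0))-J^*_N(x)\bigr]+\bigl[J^*_N(f(x,w))-J^*_N(f(x,0))\bigr]$, bounds the first bracket by $-a_1(\|x\|)$ via the nominal decrease argument, and bounds the second by $L_J L_f\|w\|$ using the Lipschitz continuity of the multiparametric-QP value function and of the closed-loop map (Proposition 17 of \cite{Goulart}); the sandwich bounds $a_2(\|x\|)\le J^*_N(x)\le a_3(\|x\|)$ come from Lipschitz continuity plus Lemma 4.3 of \cite{Khalil} rather than from $\lambda_{\min}(Q)$. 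Your approach instead expands the candidate cost term by term, isolating the disturbance-driven corrections $\delta_i$ and bounding the cross terms directly. What the paper's route buys is uniformity over all of $X_W$ with essentially no computation, since the Lipschitz constants are global on the (polyhedral) feasible set; what your route buys is self-containedness — you never need the mpQP regularity result — at the cost of an extra assumption ($Q\succ 0$ for the lower bound) and more bookkeeping. The one point you should tighten: the proposition asserts the ISS estimate for all $x_0\in X_W$, whereas your uniform bounds on the optimal trajectory (hence on the cross terms) and your upper bound $\alpha_2$ are established only on compact sublevel sets $\Omega_c$. This closes cleanly when $X_W$ is bounded (as in the paper's examples, where the state constraints are box constraints), since $X_W$ is RPI by Proposition \ref{Prop:Recursive feasibility} and you can take $c=\sup_{x\in X_W}V(x)$; but as written your argument delivers only regional ISS, and you should either add a boundedness hypothesis on $X_W$ or fall back on the Lipschitz argument to get uniformity for free.
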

\begin{proof}
The proof follows the approach presented in \cite{Goulart}, where it is shown that the optimal cost function $J^*_N(x)$ is a ISS Lyapunov function \cite{iss} for the closed loop system. Let $f(x_k,w_k)=Ax_k+B\pi(x_k)+B_w w_k$ describe dynamics of the closed loop system. As a consequence of \cref{Prop:Recursive feasibility}, $X_W$ is a RPI set of $f(\cdot)$. 

First, it can be seen that $J^*_N(\cdot)$ is a Lyapunov function for the undisturbed system. This is because, by using Proposition 17 from \cite{Goulart}, it holds that $J^*_N(\cdot)$ and $\pi(\cdot)$ are Lipschitz continuous on $X_W$. It follows from Lemma 4.3 in \cite{Khalil} that there exist $\mathcal{K}_{\infty}$ functions  $a_2(\cdot)$ and $a_3(\cdot)$ such that $a_2(\|x\|)\leq J_N^*(x)\leq a_3(\|x\|)$. Finally, using the candidate solution in \eqref{eq:Candidate_traj}, the optimal value function satisfies $J^*_N(f(x,0))-J^*_N(x) \leq -a_1(\|x\|)$  for a $\mathcal{K}_{\infty}$ function $a_1(\cdot)$.

Now let $L_{J}$ and $L_f$ be the Lipschitz constants of $J^*_N(\cdot)$ and $f(x,\cdot)$, respectively. Then, 
\begin{align}
    &J^*_N(f(x,w))-J_N^*(x)  \nonumber \\
    &= J^*_N(f(x,0))-J^*_N(x)+J^*_N(f(x,w))-J^*_N(f(x,0))  \nonumber\\
        &\le -a_1(\|x\|) + L_{J}L_f\|w\|.
\end{align}
Thus, $J^*_N(x)$ is a ISS Lyapunov function for the closed loop system, and using Lemma 3.5 from \cite{iss}, the closed loop system is ISS with region of attraction $X_W$.
\end{proof}

\section{NUMERICAL EXAMPLES}\label{numerical_examples}
In this section, the proposed optimized constraint tightening (OCT) algorithm is compared to Tube MPC (TMPC) \cite{Chisci} and the fully parameterized disturbance-affine MPC (FPD) \cite{Goulart} methods. 
The algorithms are implemented in MATLAB using MOSEK \cite{mosek}, YALMIP \cite{yalmip} and MPT3 \cite{MPT3}. The reported statistics correspond to the time that the solver requires to solve the optimization problems on a AMD EPYC 7H12 processor with 6GB RAM. The code for simulating these examples is available in the online repository \cite{repo_ETH}. 

Two simple mechanical spring-mass systems are used as working examples. System 1 consists of a mass connected to a fixed end by a spring and a damper. System 2 consists of three masses connected along a line using springs and dampers, with the first mass connected to a fixed end similar to System 1. The dynamics of System 1 can be described by
\begin{small}
\begin{align}\label{Sys1CT}
    \frac{d}{dt}x = \begin{bmatrix}
        0 & 1 \\ -k & -b
    \end{bmatrix}
x   
    +
    \begin{bmatrix}
        0 \\ 20
    \end{bmatrix} u
    + \begin{bmatrix}
        1 & 0 \\ 0 & 1
    \end{bmatrix} w
\end{align}
\end{small}
where the spring constant $k=1\mathrm{Nm^{-1}}$, the damping coefficient $b=0.1\mathrm{Nsm^{-1}}$. The dynamics are discretized using forward Euler method with a sampling time of $T=0.1 \mathrm{s}$. The state, the input and the disturbance of System 1 are subject to linear inequality constraints described by
\begin{align*}
\|x\|_{\infty} \le 25,  \|u\|_{\infty} &\le 1, -2 \le [w]_1 \le 2,   -5 \le [w]_2 \le 5.
\end{align*}
The parameters and constraints on System 2 are similarly defined. Moreover, the cost matrices are given as $Q=I_{n_x} $ and $ R=I_{n_u}$. The terminal feedback gain $K_f$ is computed as the solution to the infinite horizon linear quadratic regulator problem and $P$ is chosen such that it satisfies \eqref{eq:TermCostLyap} with an equality. The terminal set for the TMPC and OCT controllers is computed as suggested in \cite{Chisci}, and the terminal set of the FPD controller is chosen as the maximal robust positively invariant set under the terminal controller \cite{Kouvaritakis}.

\setlength{\textfloatsep}{5pt}
\begin{figure}
        \centering
        \includegraphics[scale=1]{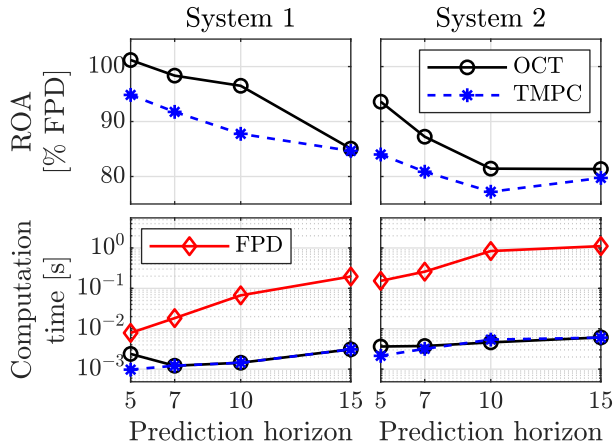}
        \caption{Comparison of average online computational times and ROA for OCT, TMPC and FPD controllers.}
        \label{Fig:all_together}
\end{figure}
\setlength{\textfloatsep}{5pt}
\begin{figure}
	\centering
	\includegraphics[scale=1]{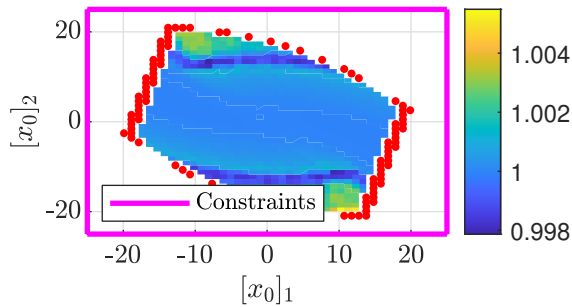}
	\caption{Colorbar shows the ratio of averaged closed loop costs of TMPC and OCT as a function of the initial state for System 1 with $N=10$. The red dots indicate initial states where TMPC is infeasible but OCT is feasible.}
	\label{Fig:perf}
\end{figure}
To compare the performance of the control algorithms, the ROA and the computational times of each controller for Systems 1 and 2 are plotted in Figure \ref{Fig:all_together}. For System 1, the ROA is estimated by dividing the state space into a grid of 2500 uniformly distributed points. The number of points for which the online optimization problem is feasible for TMPC and OCT controllers is shown as a percentage of the number of points for which the FPD controller was feasible.  For System 2, this analysis was performed using a grid consisting of 125,000 points on the $[x]_2-[x]_4-[x]_6$ subspace, by setting all the other states to zero. It can be seen that the proposed method results in a larger ROA compared to the TMPC approach. It must be noted that for System 1 with $N=15$, OCT using \eqref{PM_l2_norm} resulted in a smaller ROA compared to TMPC. However, imposing the additional constraint proposed in Remark \ref{Rem:Kf_better} allowed a strictly larger ROA for OCT compared to TMPC as shown in Figure \ref{Fig:all_together}.

Figure \ref{Fig:all_together} also shows the computation times required to solve the online optimization problem for each controller, averaged over all feasible initializations used in the ROA analysis. It can be seen that the proposed method is an order of magnitude faster than the FPD approach, and requires a similar run time to TMPC. This is because the number of optimization variables and constraints for OCT and TMPC are the same, whereas FPD has a quadratic growth in the size of the optimization problem with increase in the prediction horizon. Thus, the proposed method is able to strictly improve on the ROA of the TMPC approach at no increase in computational cost.

\resp{
The closed-loop performances of TMPC and OCT methods are compared in Figure \ref{Fig:perf}. The closed-loop costs achieved by TMPC and OCT are averaged over 50 realizations of $w_k$ generated randomly such that Assumption \ref{As:disturbance_assumption} is satisfied. It can be seen that the costs are within 0.5\% of each other, and that OCT results in a larger feasible region as indicated by the red dots. 
}

\section{Conclusions and Outlook}\label{conclusions}
A novel algorithm is presented for the offline optimization of constraint tightening to be used in robust MPC controllers. The constraint tightenings are formulated as convex functions of a disturbance affine feedback gain. A convex program is solved to minimize the constraint tightening in order to increase the region of attraction. The proposed method guarantees recursive feasibility and input-to-state stability, and numerical examples demonstrate the computational efficiency and improved region of attraction compared to existing methods from the literature. One promising direction to improve the proposed approach is to design multiple feedback gains which can be fused online.

\bibliography{CDC22_aff}

\end{document}